\newtheorem{theorem}{Theorem}
\newtheorem{proposition}{Proposition}
\let\NAT@parse\undefined
\newcommand{\Dj}{(D_-)_j}
\title{\LARGE\bf Robust Data-Driven Tube-Based Zonotopic Predictive Control \\with Closed-Loop Guarantees}
\author{
Mahsa Farjadnia$^{1,\star}$, Angela Fontan$^2$, Amr Alanwar$^3$, Marco Molinari$^{1}$, and Karl Henrik Johansson$^2$%
\thanks{
This work was supported by the Swedish Energy Authority and IQ Samhällsbyggnad project DOCENT, project number P2023-01513, agreement 2023-205321, by SSF, grant RIT17-0046, by the Digital Futures project HiSS, by the Swedish Research Council Distinguished Professor Grant 2017-01078, and by the Knut and Alice Wallenberg Foundation Wallenberg Scholar grant.
}
\thanks{$^\star$Corresponding author.}
\thanks{$^1$%
M. Farjadnia and M. Molinari are with the
Department of Energy Technology, 
KTH Royal Institute of Technology. E-mail: \{mahsafa,marcomo\}@kth.se.}%
\thanks{$^2$%
A. Fontan and K. H. Johansson are with the Division of Decision and Control Systems, KTH Royal Institute of Technology. E-mail: \{angfon,kallej\}@kth.se.}%
\thanks{$^3$A. Alanwar is with the School of Computation, Information and Technology, Technical University of Munich. E-mail:  alanwar@tum.de.}
\thanks{$^{1,2}$The authors are also affiliated with Digital Futures.}
}
\begin{document}\bstctlcite{IEEEexample:BSTcontrol}

\newcolumntype{C}{>{$}l<{$}}
\maketitle

\begin{abstract}
This work proposes a robust data-driven tube-based zonotopic predictive control (TZPC) approach for discrete-time linear systems, designed to ensure stability and recursive feasibility in the presence of bounded noise. The proposed approach consists of two phases. In an initial learning phase, we provide an over-approximation of all models consistent with past input and noisy state data using zonotope properties. Subsequently, in a control phase, we formulate an optimization problem, which by integrating terminal ingredients is proven to be recursively feasible. Moreover, we prove that implementing this data-driven predictive control approach guarantees robust exponential stability of the closed-loop system. The effectiveness and competitive performance of the proposed control strategy, compared to recent data-driven predictive control methods, are illustrated through numerical simulations.
\end{abstract}
\section{Introduction}
Model predictive control (MPC) is one of the well-established techniques in the field of control systems, renowned for its proficiency in handling system constraints, nonlinear dynamics, and trajectory tracking \cite{rawlings2017model}. Knowledge of the dynamical system model is essential for MPC to guarantee optimal control performance and satisfaction of system constraints. However, achieving this in practice can be challenging, especially for complex cyber-physical-human systems. One such application where these challenges become particularly evident is in building automation \cite{8745685}.

The building sector accounts for approximately 40\% of the total final energy supply \cite{IEA2020}, with 76\% of that consumed by heating, ventilation, and air conditioning (HVAC) systems for comfort control \cite{perez2008review}. Enhancing the sustainability and efficiency of the built environment is imperative and different efforts and initiatives are already in place \cite{Molinari2023Bottlenecks,Farjadnia2023Windows,Fontan2023Socialabbrv,behrunani2023degradation}.

Data-driven methodologies for predictive control are increasingly recognized within the control engineering community to address these challenges \cite{hou2013model}.
The pioneering work \cite{coulson2019data} represents a significant contribution to this field, introducing the data-enabled predictive control algorithm (DeePC) to address the optimal trajectory tracking problem for unknown linear time-invariant (LTI) systems. Originally based on behavioral systems theory, DeePC was limited to noise-free data scenarios, constraining its applicability in real-world settings. To address this limitation, the authors of \cite{coulson2021distributionally} extended DeePC to consider stochastic LTI systems.
A rich literature on data-driven MPC is now present, addressing both noise-free~\cite{berberich2020data} and noisy conditions~\cite{Alanwar2022Robust,Berberich2021Guarantees}. Recent studies have been exploiting zonotope properties to handle unknown but bounded system noises, in both linear \cite{Alanwar2022Robust,Alanwar2023Datadriven,Russo2023TZDDPC} and nonlinear systems \cite{Farjadnia2023NZPC}. 
In \cite{Alanwar2022Robust}, the proposed robust data-driven zonotopic-based (ZPC) approach employed data-driven predictions of reachable sets within a predictive control framework, providing robust constraint satisfaction.
Despite its novel contribution, this work fell short of demonstrating recursive feasibility or offering any closed-loop stability guarantees.
Building on \cite{Alanwar2022Robust}, the work \cite{Russo2023TZDDPC} provided closed-loop stability via a tube-based zonotopic data-driven predictive control (TZDDPC) approach, although it similarly lacked the analysis of recursive feasibility.

The aim of this work is to develop and analyze a robust data-driven tube-based zonotopic predictive control (TZPC) method for unknown LTI systems, subject to bounded noise as well as state and input constraints. This work builds on the results presented in \cite{Alanwar2022Robust,Alanwar2023Datadriven}. However, in contrast to the approach taken in \cite{Alanwar2023Datadriven} (and accordingly \cite{Russo2023TZDDPC}), TZPC introduces two critical contributions. First, it proves the recursive feasibility of the optimal control problem. Second, it guarantees robust exponential stability for the closed-loop system, which is achieved by exploiting terminal ingredients in the TZPC formulation. Finally, this paper presents a comparative analysis, evaluating both the performance and computational complexity of the proposed TZPC strategy relative to the ZPC method \cite{Alanwar2022Robust} and to the TZDDPC method~\cite{Russo2023TZDDPC}. In particular, it is shown that employing reachability analysis concepts implicitly within TZPC significantly reduces execution time, as opposed to the direct incorporation of reachable sets in ZPC and TZDDPC. 

The remainder of the paper is organized as follows. Section~\ref{sec:preliminaries} introduces technical preliminaries and the notation used in the paper. Section~\ref{sec:problem-formulation} provides the problem formulation. Section~\ref{sec:TZPC} presents the novel TZPC approach and includes a design procedure for the terminal ingredients. Section \ref{sec:guarantees} discusses the closed-loop guarantees and Section~\ref{sec:numerical-simulations} illustrates two numerical simulations. Finally, Section~\ref{sec:conclusion} offers conclusive remarks.

\section{Technical preliminaries}\label{sec:preliminaries}

\subsection{Notations}
The set of natural numbers, nonnegative integers, real $n$-dimensional space, and real $n \times m$ matrix space are denoted by $\N$, $\Z_{\geq 0}$, $\R^{n}$, and $\R^{n \times m}$. A positive (resp. negative) definite matrix $A$ is denoted by $A>0$ (resp. $A<0$). For a matrix $A$, $\norm{A}_F$ is its Frobenius norm.
For a vector $x$, $\abs{x}$ is its element-wise absolute value, $\Vert x \Vert$ is its 2-norm, and $\Vert x \Vert_P^2=x^{\top}P x$, where $P>0$. The distance of a point $x$ from a set $\mathcal{S}$ is defined as $\norm{x}_\mathcal{S} = \text{inf}\{\Vert x-z\Vert \big| z \in \mathcal{S} \}$. 
For a matrix $A$, $(A)_j$ denotes its $j$th column, $A^{\top}$ its transpose, and $A^\dagger$ its Moore-Penrose pseudoinverse. $\diag \cdot$ is the diagonal operator making a diagonal matrix by its arguments.
$\1_n$ and $\zero_n$ are the n-dimensional vectors with all entries equal to $1$ and $0$, respectively. Where no confusion arises, we use $\1$ and $\zero$ to denote matrices with the proper dimensions.

\subsection{Set representation}
A zonotope is an affine transformation of a unit hypercube, and it is defined as follows \cite{Kuhn1998}.
\begin{definition}[Zonotope]
    Given a center $c_{\zon} \in \R^n $ and a number $\gamma_\zon\in \N$ of generator vectors $g_{\zon_i}\in\mathbb{R}^n$, a zonotope is defined as
    \begin{equation*}
        \zon = \Big\{ x \in \R^{n} \Big| x = c_{\zon} + \sum_{i=1}^{\gamma_\zon} \beta_i\, g_{\zon_i},-1 \leq \beta_i\leq 1 \Big\}.
    \end{equation*}
    We use the short notation $ \zon = \langle c_{\zon},G_{\zon} \rangle$ to denote a zonotope, where $G_{\zon} = [g_{\zon_1}\dots g_{\zon_{\gamma_\zon}}]\in \R^{n\times \gamma_\zon }$. 
\end{definition}
A zonotope $\zon$ can be over-approximated by a multidimensional interval as \cite[Proposition~2.2]{Althoff2010PhD}: 
\begin{equation*}
I = \Int \zon = [c_{\zon}-\Delta g,c_{\zon}+\Delta g], \;\; \Delta g=\sum_{i=1}^{\gamma_\zon} |g_{\zon_i}|.
\end{equation*}
The conversion of an interval $I=[\underline{I},\overline{I}]$ to a zonotope is denoted by $\zon =  \mathrm{zonotope}(\underline{I},\overline{I})$.

Given $L\in\R^{m\times n}$, the linear transformation of a zonotope $\zon$ is a zonotope $L\zon=\langle Lc_{\zon},LG_{\zon} \rangle$. Given two zonotopes $\zon_1 = \langle c_{\zon_1 },G_{\zon_1 } \rangle \subset \R^n$ and $\zon_2 = \langle c_{\zon_2 },G_{\zon_2 } \rangle\subset \R^n$, the Minkowski sum, the Minkowski difference, and the Cartesian product are respectively computed as:
\allowdisplaybreaks
\begin{align*}
\zon_1 \oplus \zon_2  
    &= \Big\langle c_{\zon_1 }+ c_{\zon_1 },[G_{\zon_1 } \, \, G_{\zon_2 }] \Big\rangle,
    \\
    \zon_1\ominus \zon_2  
    &= \{ z_1 \in \R^n\,|\, z_1 \oplus \zon_2 \subseteq \zon_1\},
\\
    \mathcal{Z}_1 \times \zon_2 
    &= \left\langle \begin{bmatrix} c_{\zon_1}\\ 
    c_{\mathcal{Z}_2} \end{bmatrix},
    \begin{bmatrix} G_{\zon_1} & \zero \\ \zero & G_{\zon_2} \end{bmatrix}
   \right\rangle.
\end{align*}

A matrix zonotope is presented as $ \mathcal{M} = \langle C_{\mathcal{M}}, G_{\mathcal{M}} \rangle$, and its definition is analogous to that of a zonotope.  It is characterized by a center matrix $C_{\mzon} \in \mathbb{R}^{n\times m}$  and $\gamma_{\mzon} \in \N$ generator matrices $G_{\mathcal{M}} = \begin{bmatrix}G_{\mzon_1}& \cdots & G_{\mzon_{\gamma_\mzon}} \end{bmatrix}\in \R^{n\times (m\gamma_{\mzon}) }$ \cite[p.~52]{Althoff2010PhD}. A matrix zonotope can be over-approximated by an interval matrix $I_\mathcal{M} =\Int{\mathcal{M}} =  [\underline{\mathcal{M}},\overline{\mathcal{M}}]$, where $\underline{\mathcal{M}} \in \R^{n\times m}$ and $\overline{\mathcal{M}} \in \R^{n\times m}$ denote its lower bound and upper bound, respectively. The Frobenius norm of the interval matrix $I_\mathcal{M}$ is defined as
$\Fnorm{I_\mathcal{M}}_F = \norm{|I_{\mathcal{M}_c}|+\Delta}_F$ with $I_{\mathcal{M}_c} = 0.5(\overline{\mathcal{M}}+\underline{\mathcal{M}})$ and $\Delta = 0.5(\overline{\mathcal{M}}-\underline{\mathcal{M}})$ \cite{farhadsefat2011norms}.

\subsection{Set invariance and stability}
This section introduces the concepts of set invariance and robust exponential stability for systems in the presence of bounded noise \cite{mayne2005robust}.

Consider a system with state constraint set $\mathcal{X}$ and noise bounded by a zonotope $\zon_w$,
\begin{align}
    x(k+&1)=f(x(k),w(k)),  \notag \\ &x(k)\in \mathcal{X},  \; w(k)\in \zon_w,\forall\, k\in \mathbb{Z}_{\geq 0}.
    \label{eq:sys-preliminaries}
\end{align}

\begin{definition}[Robustly positive invariant set]
A set $\mathcal{S} \subset \R^{n_x}$ is robustly positive invariant for the system \eqref{eq:sys-preliminaries} if $\mathcal{S} \subseteq \mathcal{X}$ and $x(k+1) \in \mathcal{S}$, $\forall\, x(k) \in \mathcal{S}$, $\forall\, w(k) \in \zon_w,\forall\, k\in \mathbb{Z}_{\geq 0} $.
\end{definition}
\begin{definition}[Robust exponential stability]
\label{def:robust-exp-stability}
A set $\mathcal{S}$ is robustly exponentially stable for the system~\eqref{eq:sys-preliminaries}, with a region of attraction $\Omega$, if there exist a $c > 0$ and a $\gamma \in (0,1)$ such that any solution $x(k)$ of~\eqref{eq:sys-preliminaries} with initial state $x(0)\in\Omega$ satisfies $\norm {x(k)}_\mathcal{S} \le c \gamma^k\norm {x(0)}_\mathcal{S}$, $\forall\, k\in \mathbb{Z}_{\geq 0}$.
\end{definition}

\section{Problem formulation}\label{sec:problem-formulation}
We consider the following discrete-time linear control system 
\begin{equation}
    x(k+1) = A x(k) + B u(k) + w(k),
\label{eq:model-linear}
\end{equation}
where $A\in \R^{n_x \times n_x}$ and $B\in \R^{n_x\times n_u}$ are unknown system dynamics matrices, $x(k) \in \zon_x =\langle c_{\zon_x},G_{\zon_x} \rangle \subset  \R^{n_x}$ and $u(k) \in \zon_u = \langle c_{\zon_u},G_{\zon_u} \rangle \subset \R^{n_u}$ are respectively the state and the control input of the system at time $k \in \mathbb{Z}_{\geq 0}$, and $w(k)$ denotes the noise. 
Input and state constraints for a controller design are 
\begin{align}\label{eq:constraints}
    u(k) \in \mathcal{U}\subseteq \zon_u , \quad x(k) \in \mathcal{X}\subseteq \zon_x .
 \end{align}
In \eqref{eq:constraints}, $\zon_u$ and $\zon_x$ represent the time-invariant domains of control inputs and states, respectively, corresponding to inherent constraints of the problem and possibly driven by physical limitations. 
We make the following standing assumptions, which will hold throughout the paper.
\begin{assumption}\label{ass:controllability}
The pair $(A, B)$ is controllable.
\end{assumption}
\begin{assumption}\label{ass:zon-noise}
The noise $w$ is bounded by a known zonotope, i.e., $w(k)\in \zon_w = \langle c_{\zon_w},G_{\zon_w} \rangle$ $\forall\, k\in \mathbb{Z}_{\geq 0}$, which includes the origin. 
\end{assumption}

We aim to solve a receding horizon optimal control problem, ensuring recursive feasibility and robust exponential stability as closed-loop guarantees, by employing only the input and noisy state data of the system \eqref{eq:model-linear} and without explicit knowledge of the system matrices. To be precise, we assume access to past $n_T$ input-state trajectories of different lengths, denoted by $T_i+1$, for $i=1,\dots,n_T$. These trajectories are denoted as $\{ u(k)\}^{0}_{k=-T_i}$ and $\{ x(k)\}^{0}_{k=-T_i}$, where a negative data point index $k$ refers to data obtained from offline experiments. For notational simplicity, in the theoretical formulation, we consider a single trajectory ($n_T=1$) of adequate length $T$ and collect all the input and noisy state data into the following matrices:
\begin{align*}
     X_{+} &= \begin{bmatrix}  x(-T+1) & x(-T) &\cdots & x(0)\end{bmatrix}, \\
     X_{-} &= \begin{bmatrix}  x(-T)& x(-T+1) & \cdots & x(-1)\end{bmatrix}, \\
      U_{-} &= \begin{bmatrix} u(-T) &  u(-T+1)  & \cdots & u(-1) \end{bmatrix}.
\end{align*} 
We define $D_-= \begin{bmatrix} X_{-}^\top&U_{-}^\top\end{bmatrix}^\top$ and denote all available past data by $D= \begin{bmatrix} X_{+}^\top&X_{-}^\top &U_{-}^\top\end{bmatrix}^\top$. We consider the following standing assumption, necessary for the data-driven approach explained in the next section.
\begin{assumption}\label{ass:rank_D}
    We assume that the data matrix $D_-$ has full row rank, i.e., $\mathrm{rank}(D_-) = n_x + n_u$.
\end{assumption}

We denote the sequence of the unknown noise corresponding to the available input-state trajectories by $\{w(k)\}_{k=-T}^{0}$. Based on Assumption~\ref{ass:zon-noise}, we can directly infer that $W_{-} = \begin{bmatrix}  w(-T)& w(-T+1) & \cdots & w(-1)\end{bmatrix} \in\mzon_w = \langle C_{\mzon_w},G_{\mzon_w} \rangle$ with $C_{\mzon_w} \in \R^{n_x \times n_T}$ and $G_{\mzon_w} \in \R ^{n_x \times \gamma_\zon n_T}$. Here, $\mzon_w$ represents the matrix zonotope resulting from the concatenation of multiple noise zonotopes $\zon_w$ \cite{Alanwar2023Datadriven}.

\section{Robust Data-driven Tube-based Predictive Control}\label{sec:TZPC}
This section describes the novel TZPC approach for the unknown linear system~\eqref{eq:model-linear} corrupted by noise. 
The proposed TZPC algorithm consists of two phases: an offline learning phase (Section~\ref{subsec:Learning}) and an online control phase (Section~\ref{subsec:Control}).

\subsection{Learning Phase} \label{subsec:Learning} 
 The goal of the learning phase is to replace the model-based description in traditional MPC with a data-driven system representation. This takes the form of a matrix zonotope, serving as an over-approximation of all models consistent with the available input and noisy-state data (Lemma~\ref{lem:setofAB}). 
We will then utilize concepts from the tube-based robust MPC literature to define a nominal system \eqref{eq:nominal-model}, a feedback policy \eqref{eq:feedback-policy}, and an associated error dynamics \eqref{eq:controlled-sys-RPI}.

\begin{lemma}[Lemma~1,~\cite{Alanwar2023Datadriven}]\label{lem:setofAB}
Given the input-state trajectories $D$ of the system~\eqref{eq:model-linear}, the matrix zonotope
\begin{align}
    \mzon_D = (X_{+} \oplus- \mzon_w) D_-^\dagger,
   \label{eq:zonoAB}
\end{align}  
with $D_-=\begin{bmatrix} X_-^\top & U_-^\top \end{bmatrix}^\top$, contains all system dynamics matrices $\begin{bmatrix}A & B \end{bmatrix}$ that are consistent with the data $D$ and the noise bound $\mzon_w$. 
\end{lemma}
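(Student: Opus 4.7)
The plan is to show that every $[A\;B]$ admitting a noise realization $W_-\in\mzon_w$ with $X_+ = [A\;B]\,D_- + W_-$ must lie in $\mzon_D$. The argument reduces to a one-line algebraic inversion enabled by Assumption~\ref{ass:rank_D}, followed by an appeal to the matrix-zonotope set-arithmetic rules recalled in Section~\ref{sec:preliminaries}.

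First, I would stack the dynamics \eqref{eq:model-linear} column-by-column over the data horizon $\{-T,\dots,-1\}$. With the definitions of $X_+$, $X_-$, $U_-$, and $D_-$, this yields the batch identity $X_+ = [A\;B]\,D_- + W_-$, and a pair $[A\;B]$ is \emph{consistent with the data and the noise bound} precisely when such a $W_-\in\mzon_w$ exists. The key step is then to invert $D_-$ on the right: since Assumption~\ref{ass:rank_D} ensures $\mathrm{rank}(D_-) = n_x+n_u$, the Moore--Penrose pseudoinverse satisfies $D_-\,D_-^\dagger = I_{n_x+n_u}$, and right-multiplication isolates
\begin{equation*}
[A\;B] \;=\; (X_+ - W_-)\,D_-^\dagger.
\end{equation*}

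The remainder is pure set arithmetic on matrix zonotopes. Because $W_-\in\mzon_w$, the matrix $X_+ - W_-$ belongs to the singleton-shifted matrix zonotope $X_+ \oplus (-\mzon_w)$, and right-multiplication by the fixed matrix $D_-^\dagger$ is a linear map, which sends matrix zonotopes to matrix zonotopes (with center and generators transformed accordingly). Applying this map identifies $(X_+ - W_-)\,D_-^\dagger$ as an element of $(X_+ \oplus -\mzon_w)\,D_-^\dagger = \mzon_D$, placing $[A\;B]\in\mzon_D$ as claimed. I expect the only real subtlety to lie in the inversion step: without Assumption~\ref{ass:rank_D} the identity $D_-D_-^\dagger = I$ fails, and one would recover only the projection of $[A\;B]$ onto the row space of $D_-$ rather than the matrix itself. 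Everything else is a direct transcription of the zonotope identities of Section~\ref{sec:preliminaries} to their matrix-zonotope counterparts.
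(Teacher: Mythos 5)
Your proposal is correct and follows exactly the standard argument for this result: the paper itself states the lemma by citation without reproving it, and the proof in the cited reference is precisely your batch identity $X_+=[A\;B]D_-+W_-$, the right-inversion $D_-D_-^\dagger=I_{n_x+n_u}$ guaranteed by the full-row-rank assumption, and the closure of matrix zonotopes under Minkowski sum and right-multiplication by a fixed matrix. No gaps.
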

The offline data-collection phase, referred to as the learning phase to be consistent with \cite{Farjadnia2023NZPC}, as described in Lemma~\ref{lem:setofAB}, necessitates the existence of a right pseudoinverse for the matrix $D_-$, ensured by Assumption~\ref{ass:rank_D}.
 
As in standard tube-based MPC, we define a nominal system for the model~\eqref{eq:model-linear} as follows:
\begin{equation}\label{eq:nominal-model}
    \bar{x}(k+1) = \bar{A}\bar{x}(k)+\bar{B} \bar{u}(k),
\end{equation}
where $\bar{x}$ and $\bar{u}$ are nominal state and control input, and the matrix $\bar{M} =\begin{bmatrix} \bar{A} & \bar{B} \end{bmatrix}$ is chosen from $\mzon_D$. The following assumption ensures the existence of a local stabilizing controller, common to all models in $\mzon_D$ consistent with the data. This in turn implies the stabilizability of the selected pair $(\bar{A},\bar{B})$. 

\begin{assumption}
\label{ass:localcontroller}
There exists a matrix $K\in \R^{n_x\times n_u}$ and a positive definite matrix $P=P^\top\in \R^{n_x \times n_x}$ such that, for any chosen nominal model $\begin{bmatrix} \bar{A} & \bar{B} \end{bmatrix} \in \mzon_D$, and a positive definite matrix $Q_K = Q_K^\top\in \R^{n_x \times n_x}$, the inequality $\bar{A}_K^T P \bar{A}_K - P <-Q_K$, with $\bar{A}_K = \bar{A}+\bar{B}K$, holds. 
\end{assumption}

Considering the nominal model~\eqref{eq:nominal-model}, the feedback policy for the system~\eqref{eq:model-linear} is defined as 
\begin{equation}
    u(k) = \bar{u}(k) + K (x(k)-\bar{x}(k)).
\label{eq:feedback-policy}
\end{equation}
We can rewrite the system dynamics as follows:
\begin{equation}
    x(k+1) = (\bar{M}+\Delta M) \begin{bmatrix} x(k)\\u(k) \end{bmatrix}+w(k),
    \label{eq:model-mismatch}
\end{equation}
where $\Delta M = M - \bar{M}$ is the model mismatch between the true unknown system matrix $M = \begin{bmatrix} A & B \end{bmatrix}$ and the nominal system matrix $\bar{M}$. 
\begin{lemma}\label{lem:mismatch}
Given input-state trajectories $D$ of \eqref{eq:model-linear} and $\zon_w$, the model mismatch $\Delta M$ 
is over-approximated for all  $(x,u) \in \zon_x \times \zon_u$ as:
\begin{equation}
    \Delta M \begin{bmatrix} x(k)\\u(k) 
    \end{bmatrix}\in \zon_M \oplus \zon_\epsilon,
    \label{eq:model-mismatch-bound}
\end{equation}
where the zonotopes $\zon_M$ and $\zon_\epsilon$ are computed as
\begin{equation}\label{eq:Z_M}
    \zon_M= \mathrm{zonotope}(\underline{z}_m,\overline{z}_m)\oplus- \zon_w,
\end{equation}
with $\underline{z}_m  = \min_{j=1,\dots,T} \big( {(X_{+})}_{j} - \bar{M} \Dj \big),$ $
      \overline{z}_m  = \max_{j=1,\dots,T} \big( {(X_{+})}_{j} - \bar{M}\Dj\big),$
and 
$$\zon_\epsilon = \big\langle \zero_{n_x},\mathrm{diag}(\Fnorm{I_{\mzon_D}}_F\, \delta/2,\dots, \Fnorm{I_{\mzon_D}}_F\,\delta/2)\big\rangle,$$ where $I_{\mzon_D} = \Int {\mzon_D}$ and
$\delta$ is the covering radius of $D_-$.
\end{lemma}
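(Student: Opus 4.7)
The plan is to split $\Delta M\,[x^\top\ u^\top]^\top$ into two pieces: one associated with a data column close to $[x^\top\ u^\top]^\top$, which will be absorbed into $\zon_M$, and a residual due to the gap between $[x^\top\ u^\top]^\top$ and that column, which will be absorbed into $\zon_\epsilon$. The first piece uses the data-generating equation column by column; the second piece uses the covering-radius property of $D_-$ together with a Frobenius-norm bound on $\Delta M$ derived from $M,\bar M \in \mzon_D$.

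First, I would prove that $\Delta M (D_-)_j \in \zon_M$ for every column index $j\in\{1,\dots,T\}$. Applying~\eqref{eq:model-linear} to the data gives $(X_+)_j = M(D_-)_j + (W_-)_j$, which rearranges into
\begin{equation*}
\Delta M (D_-)_j \;=\; \bigl((X_+)_j - \bar M (D_-)_j\bigr) \;-\; (W_-)_j .
\end{equation*}
By definition of $\underline z_m$ and $\overline z_m$, the parenthesized term lies in the interval $[\underline z_m,\overline z_m]$ and hence in $\mathrm{zonotope}(\underline z_m,\overline z_m)$, while $(W_-)_j\in\zon_w$ by Assumption~\ref{ass:zon-noise}. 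A Minkowski-sum argument then places $\Delta M(D_-)_j$ inside $\mathrm{zonotope}(\underline z_m,\overline z_m)\oplus -\zon_w = \zon_M$.

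Second, for arbitrary $(x,u)\in\zon_x\times\zon_u$, the covering-radius property of $D_-$ furnishes an index $j^\star$ with $\bigl\lVert [x^\top\ u^\top]^\top - (D_-)_{j^\star}\bigr\rVert \le \delta/2$. Decomposing
\begin{equation*}
\Delta M \begin{bmatrix} x\\u \end{bmatrix} \;=\; \Delta M (D_-)_{j^\star} \;+\; \Delta M \Bigl(\begin{bmatrix} x\\u \end{bmatrix} - (D_-)_{j^\star}\Bigr),
\end{equation*}
the first summand sits in $\zon_M$ by the previous step. For the second, Lemma~\ref{lem:setofAB} ensures $M,\bar M\in\mzon_D$, so entrywise $|\Delta M|$ is controlled by the magnitude bound defining $\Fnorm{I_{\mzon_D}}_F$. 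A componentwise Cauchy--Schwarz then yields $\bigl|(\Delta M\,e)_i\bigr|\le \lVert\Delta M\rVert_F\,\lVert e\rVert\le \Fnorm{I_{\mzon_D}}_F\,\delta/2$ for each coordinate $i$, placing the residual inside the axis-aligned box $\zon_\epsilon$. Taking the Minkowski sum of the two contributions delivers~\eqref{eq:model-mismatch-bound}.

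The main obstacle will be pinning down the Frobenius-norm bound $\lVert\Delta M\rVert_F \le \Fnorm{I_{\mzon_D}}_F$ with the correct constant: a naive triangle inequality on $M,\bar M\in\mzon_D$ would give $2\Fnorm{I_{\mzon_D}}_F$, so the tight factor requires exploiting the matrix-zonotope structure (for instance, absorbing the center of $\mzon_D$ into $\bar M$ so that $\Delta M$ is captured by the generator part only). Matching the $1/2$ in $\zon_\epsilon$ likewise hinges on the precise convention chosen for the covering radius of $D_-$; once these constants are aligned, the remaining arguments are routine bookkeeping with Minkowski sums.
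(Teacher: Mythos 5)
Your proof follows essentially the same route as the paper's: first place $\Delta M \Dj$ in $\zon_M$ via the data-generating equation $(X_+)_j=(\bar M+\Delta M)\Dj+(W_-)_j$, then use the covering radius of $D_-$ together with a Frobenius-norm bound extracted from $I_{\mzon_D}$ to absorb the residual $\Delta M([x^\top\,u^\top]^\top-\Dj)$ into $\zon_\epsilon$. The two constant issues you flag at the end are real but are glossed over in the paper's own proof as well (it bounds $\norm{M([x^\top\,u^\top]^\top-\Dj)}\le\Fnorm{I_{\mzon_D}}_F\,\delta$ with $M$ rather than $\Delta M$ and with $\delta$ rather than $\delta/2$, and then simply asserts the containment in the box of half-width $\Fnorm{I_{\mzon_D}}_F\,\delta/2$), so your treatment is, if anything, the more explicit one.
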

\begin{proof}
First observe that the equality $(X_+)_{j} - (W_-)_{j} =  (\bar{M}+\Delta M)\,\Dj $ holds for any $\Dj$ and its corresponding noise realization $(W_{-})_{j} \in \zon_w$. This implies that
\begin{equation}
    \Delta M \, \Dj \in  (X_+)_{j} -\bar{M} \,\Dj \oplus- \zon_w . 
\label{eq: pointmodelerrorbound}
 \end{equation}
By over-approximating the right-hand side of \eqref{eq: pointmodelerrorbound} for all $\Dj$, $j = 1, \dots, T$, together with their noise realizations, we obtain $\zon_M$ as outlined in~\eqref{eq:Z_M}.

To complete the proof and show \eqref{eq:model-mismatch-bound}, we over-approximate the model mismatch for all $(x,u)\in \zon_x\times \zon_u$. Since $\zon_x {\times} \zon_{u}$ is compact, it follows that all points $ \Dj$, $j=1,\dots,T$, are dense in $\zon_x {\times} \zon_{u}$. In other words, there exists some $\delta \geq 0$ such that, for any $[x(k)^\top \, u(k)^\top]^{\top} \in \zon_x \times \zon_{u}$, there exists a $\Dj$ for which $\norm{[x(k)^\top \, u(k)^\top]^{\top}-\Dj } \le \delta$ holds (details in \cite{montenbruck2016some}). 
Here, $\delta$ represents the covering radius for $D_-$. Then, for any $[x(k)^\top \, u(k)^\top]^{\top}\in \zon_x \times \zon_{u}$ and $\Dj$, it holds that
\begin{align*}
&\norm{\begin{bmatrix} A & B \end{bmatrix} ([x(k)^\top \, u(k)^\top]^{\top}-\Dj)}
\\ 
&\le \norm{\begin{bmatrix} A & B 
\end{bmatrix}} \norm{ [x(k)^\top \, u(k)^\top]^{\top}\!-\Dj} \le \Fnorm{I_{\mzon_D}}_F\delta,
\end{align*}
which implies \eqref{eq:model-mismatch-bound}.  
\end{proof}
\begin{remark}[Computing $\delta$] 
To estimate the covering radius $\delta$ (see \cite{montenbruck2016some} for the definition of covering radius), we follow the method presented in \cite[Remark 11]{Alanwar2023Datadriven} using the available data $D$.
\end{remark}

Before discussing the control phase in the next section, we define the error between the measured state and the nominal state as $e(k)=x(k)-\bar{x}(k)$. The controlled error dynamic is then given by
\begin{equation}\label{eq:controlled-sys-RPI}
    e(k+1) = \bar{A}_K e(k)+ \phi(k),
\end{equation}
with    
\begin{equation*}
\phi(k) = \Delta M \begin{bmatrix} x(k)\\u(k) \end{bmatrix}+w(k) \in \zon_\phi= \zon_M \oplus \zon_\epsilon \oplus \zon_w,
\end{equation*} 
where $\zon_M$ and $ \zon_\epsilon$ are defined in Lemma~\ref{lem:mismatch}.
Without loss of generality, it is assumed that the set $\zon_M$ encompasses the origin, which, in turn, implies that the set $\zon_\phi$ also contains the origin. Under Assumption~\ref{ass:localcontroller}, we know that there exists a robustly positive invariant set $\mathcal{S}$ for the error dynamics~\eqref{eq:controlled-sys-RPI}, i.e., $\bar{A}_K \mathcal{S} \oplus \zon_\phi\subseteq \mathcal{S}$ \cite[Sec. IV]{kolmanovsky1998theory}. This guarantees the true trajectory $x$ to be in a robustly positive invariant set  $\mathcal{S}$ around the nominal trajectory $\bar{x}$, as we will prove in the following sections. The computation of $\mathcal{S}$ will be discussed in the next section.
    
\subsection{Control phase}\label{subsec:Control}
In this section, we propose a data-driven predictive controller that leverages the data-driven system representation obtained in the learning phase (Section~\ref{subsec:Learning}). This tube-based technique employs a robust feedback controller keeping the noisy system~\eqref{eq:model-linear} in a robustly positive invariant set around a nominal trajectory~\eqref{eq:nominal-model}. To guarantee the satisfaction of the state/input constraints \eqref{eq:constraints}, tighter constraints are designed for the noise-free nominal system \eqref{eq:nominal-model} and thus for the state/input decision variables. We formulate the following data-driven optimal control problem at time $t$:
\begin{subequations}
\begin{align}
\min_{\bar{u}_t,\bar{x}_t} \quad& \!\!\!\!\!\!\!\!\sum_{k=t}^{t+N-1} \!\! \ell(\bar{x}(k|t),\bar{u}(k|t)) {+} \ell_N(\bar{x}(t+N|t)) \label{eq:cost-function}\\
\text{s.t.}\quad&  \bar{x}(k+1|t) = \bar{A}\bar{x}(k|t) +\bar{B} \bar{u}(k|t), \label{eq:nominalconst}\\
&  \bar{u}(k|t) \in \mathcal{U} \ominus K\mathcal{S}, \label{eq:ubar-constr}\\
& \bar{x}(k|t) \in \mathcal{X}\ominus \mathcal{S}, \label{eq:xbar-constr}\\
& x(t) \in \bar{x}(t|t) \oplus \mathcal{S},\label{eq:x-constr}\\
& \bar{x}(t+N|t)\in  \mathcal{X}_N \subseteq \mathcal{X}\ominus \mathcal{S}.\label{eq:terminal-constr}
\end{align}
\label{eq:opt-zonopc}
\end{subequations}
Here, $ \bar{u}_t = \{\bar{u}(t|t), \dots, \bar{u}(t+N-1|t)\}$ and $ \bar{x}_t = \{\bar{x}(t|t), \dots, \bar{x}(t+N|t)\}$ are the sequences of input and state decision variables, respectively, predicted at time $t$ over the finite horizon $N$. Let $[x_s^\top \; u_s^\top]^\top$ denote a desired setpoint of the system~\eqref{eq:model-linear}. The cost function in \eqref{eq:cost-function} contains a stage cost $\ell(\bar{x}(k|t),\bar{u}(k|t)) = \norm{\bar{x}(k|t) -x_s}_Q^2 + \norm{\bar{u}(k|t) - u_s}_R^2$, where $Q>0$ and  $R>0$ are design parameters. Additionally, it contains a terminal cost characterized by the matrix $P>0$, presented as $\ell_N (\bar{x}(t+N|t))= \norm{\bar{x}(t+N|t) -x_s}_P^2$. Without loss of generality, we let $u_s = \zero_{n_u}$ and $x_s = \zero_{n_x}$. Equation \eqref{eq:nominalconst} indicates the nominal system dynamics, while \eqref{eq:ubar-constr} and \eqref{eq:xbar-constr} represent the tightened input and state constraints to be satisfied. These constraints implicitly incorporate reachability analysis in the optimal problem~\eqref{eq:opt-zonopc}, differently from ZPC \cite{Alanwar2022Robust}, where the allowable reachable sets are explicitly computed at each time step. The relationship between the predicted state at current time $\bar{x}(t|t)$ and the measured state $x(t)$ is specified in \eqref{eq:x-constr}. Finally, in \eqref{eq:terminal-constr}, $\mathcal{X}_N$ is the terminal constraint on the predicted state at time $t+N$.

Problem~\eqref{eq:opt-zonopc} is a convex optimization problem, and we denote its optimal solution at time $t$ by $(\bar{u}^*_t,\bar{x}^*_t)$. From \eqref{eq:x-constr}, $\bar{x}^*(t|t)$ can be expressed as $ \bar{x}^*(x(t))$, which we abbreviate as $\bar{x}^*(t)$. Accordingly, the optimal cost in \eqref{eq:cost-function} is denoted by $J^*_\ell(\bar{x}^*(t))$. Similar to standard MPC, the problem~\eqref{eq:opt-zonopc} is solved in a receding horizon fashion (Algorithm~\ref{alg:zonopc}), where we use the first optimal control input $\bar{u}^*(t|t)$ and apply $u^*(t) = \bar{u}^*(t|t)+K(x(t)-\bar{x}^*(t|t))$ to the system~\eqref{eq:model-linear}.

We assume that the terminal ingredients (i.e., $\ell_N$, $P$, $K$, and $\mathcal{X}_N$) satisfy the following assumption.
\begin{assumption}
There exist a matrix $P=P^\top>0$, a matrix $K \in \R^{n_u\times n_x}$, and a set $\mathcal{X}_N\subseteq \mathcal{X}\ominus \mathcal{S}$ such that, for all $\bar{x} \in \mathcal{X}_N$, the following conditions hold:
\begin{enumerate}
[label=(\roman*)]
    \item $\bar{A}_K\bar{x} \in \mathcal{X}_N$;  \label{ass:terminal-ing-i}
    \item $K\bar{x} \in \mathcal{U} \ominus K \mathcal{S}$;\label{ass:terminal-ing-ii}
    \item $\ell_N (\bar{A}_K\bar{x})-\ell_N (\bar{x})\leq -\ell(\bar{x},K\bar{x})$. \label{ass:terminal-ing-iii}
\end{enumerate}
\label{ass:terminal-ing}
\end{assumption}
Note that Assumption~\ref{ass:terminal-ing} is standard in MPC to ensure closed-loop stability \cite{rawlings2017model}. 

Before delving into the closed-loop guarantees of the proposed data-driven controller, we make some observations on the practical (offline) computation of the sets $\mathcal{S}$, $\mathcal{X}_N$, and the controller $K$.
\begin{remark}[Computation of $\mathcal{S}$]
To compute $\mathcal{S}$, we use the method proposed in \cite{rakovic2003approximation}, proven to be effective in practical applications. The approach is summarized in the following proposition.
\begin{proposition}
Consider the controlled error dynamics \eqref{eq:controlled-sys-RPI}. If there exist a $\kappa \in \Z_{\geq 0}$ and a $\theta\in [0,1)$ such that $\bar{A}_K^\kappa\zon_\phi \subseteq \theta \zon_\phi$, then 
\begin{equation*}
    \mathcal{S}(\theta,\kappa) = (1-\theta)^{-1} \bigoplus^{\kappa-1}_{i=0}\bar{A}_K^i\zon_\phi
\end{equation*}
is a convex, compact, and robustly positive invariant set of \eqref{eq:controlled-sys-RPI}. 
\end{proposition}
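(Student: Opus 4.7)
The plan is to verify the three asserted properties — convexity, compactness, and robust positive invariance — separately. The first two follow directly from the zonotope formalism: each set $\bar{A}_K^i \zon_\phi$ is the image of the zonotope $\zon_\phi$ under a linear map and is therefore itself a zonotope; the Minkowski sum of finitely many zonotopes is again a zonotope (the generator matrices simply concatenate); and scaling by the positive scalar $(1-\theta)^{-1}$ amounts to scaling the generators. Since every zonotope in $\R^{n_x}$ is a convex compact polytope, $\mathcal{S}(\theta,\kappa)$ inherits both properties.

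The substantive step is establishing the invariance inclusion $\bar{A}_K \mathcal{S}(\theta,\kappa) \oplus \zon_\phi \subseteq \mathcal{S}(\theta,\kappa)$ for the error dynamics~\eqref{eq:controlled-sys-RPI}. Using distributivity of the linear map $\bar{A}_K$ over Minkowski sums and shifting the summation index, one first rewrites
\begin{equation*}
\bar{A}_K \mathcal{S}(\theta,\kappa) = (1-\theta)^{-1} \bigoplus_{i=1}^{\kappa} \bar{A}_K^{i} \zon_\phi .
\end{equation*}
Isolating the $i = \kappa$ summand, invoking the hypothesis $\bar{A}_K^\kappa \zon_\phi \subseteq \theta\, \zon_\phi$, and then adding $\zon_\phi$ on both sides yields
\begin{equation*}
\bar{A}_K \mathcal{S}(\theta,\kappa) \oplus \zon_\phi \subseteq (1-\theta)^{-1} \!\! \bigoplus_{i=1}^{\kappa-1} \!\! \bar{A}_K^{i} \zon_\phi \;\oplus\; \left( \tfrac{\theta}{1-\theta} + 1 \right) \zon_\phi .
\end{equation*}
Since $\tfrac{\theta}{1-\theta} + 1 = (1-\theta)^{-1}$, the right-hand side reassembles into $(1-\theta)^{-1} \bigoplus_{i=0}^{\kappa-1} \bar{A}_K^{i} \zon_\phi = \mathcal{S}(\theta,\kappa)$, which closes the invariance argument.

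I expect the only real obstacle to be the algebraic bookkeeping in the second step — namely, correctly combining the contracted image $\bar{A}_K^\kappa \zon_\phi$ with the added disturbance zonotope so that the prefactor telescopes back to $(1-\theta)^{-1}$. The two facts doing the heavy lifting are the monotonicity of the Minkowski sum with respect to set inclusion (which legitimizes replacing $\bar{A}_K^\kappa \zon_\phi$ by $\theta\, \zon_\phi$ inside the aggregated sum) and the convex-set identity $\alpha\,\zon_\phi \oplus \beta\,\zon_\phi = (\alpha+\beta)\,\zon_\phi$ for $\alpha,\beta \geq 0$, which holds for any convex $\zon_\phi$. Both are standard, and the overall argument is the zonotopic rendition of the classical outer approximation of the minimal robust positively invariant set from~\cite{rakovic2003approximation}.
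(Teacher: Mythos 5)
Your proof is correct, and it is precisely the argument of Rakovi\'c et al.\ \cite{rakovic2003approximation} that the paper invokes by citation in place of a written proof: the crux is the telescoping identity $\tfrac{\theta}{1-\theta}\zon_\phi \oplus \zon_\phi = (1-\theta)^{-1}\zon_\phi$, valid because $\zon_\phi$ is convex and the scalars are nonnegative, and you execute it correctly together with the monotonicity of $\oplus$ under inclusion. One remark: you have silently corrected a typo in the proposition as printed --- the summand must be $\bar{A}_K^{\,i}\zon_\phi$ rather than $(\bar{A}_K^{\kappa})^{i}\zon_\phi$, since with the literal exponents $\kappa i$ the shifted sum would produce powers $\kappa i+1$ that do not recombine and the invariance inclusion would fail in general.
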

\end{remark}
\begin{remark}[Computation of $K$ and $P$]\label{remark:compute-K}
To compute $K$ and $P$ we follow \cite{Russo2023TZDDPC}. After over-approximating $\mzon_D$ by an interval matrix, we determine $K$ and $P$ to satisfy the inequality condition stated in Assumption~\ref{ass:localcontroller}, with $Q_K = Q + K^\top R K$ for its vertices. Due to practical considerations, including computational complexity associated with high dimensional systems, future research will focus on relaxing the hypothesis of a common Lyapunov function stated in Assumption~\ref{ass:localcontroller}.
\end{remark}

\begin{remark}[Computation of $\mathcal{X}_N$]
To compute $\mathcal{X}_N$, the approaches outlined in \cite{gruber2020computing,Berberich2021Terminal} can be employed. In this work, we define $\mathcal{X}_N = \{x\in \R^{n_x}| \norm{x-x_s}_P^2 \leq \alpha \}  \subseteq 
\{x\in \R^{n_x}|x \in \mathcal{X}\ominus \mathcal{S}, K x \in \mathcal{U}\ominus K\mathcal{S}\}$, where the matrix $P>0$ is obtained in Remark~\ref{remark:compute-K} and $\alpha>0$ is chosen such that the specified subset relationship is satisfied.
\end{remark}

\section{Closed-loop guarantees}\label{sec:guarantees}
We are now ready to prove the recursive feasibility and stability of the proposed approach. 
\begin{proposition}[p.~235 in \cite{rawlings2017model}]
\label{prop:cost-properties}
Let $\mathcal{F} \subseteq \mathcal{X}\ominus \mathcal{S}$ be the set of feasible solutions of \eqref{eq:opt-zonopc}.
Under Assumption~\ref{ass:terminal-ing}, $\forall \bar{x}^*(t) \in \mathcal{F}$ there exist constants $c_2>c_1>0$ such that for the cost $J^*_\ell(\bar{x}^*(t))$ the following conditions hold:
\begin{align}
    &J^*_\ell(\bar{x}^*(t)) \geq c_1  \norm{\bar{x}^*(t))}^2, \label{eq: J_lowerbound}\\
    &J^*_\ell(\bar{x}^*(t+1))-J^*_\ell(\bar{x}^*(t)) \leq -c_1  \norm{\bar{x}^*(t)}^2,\label{eq: J_difference}\\
    &J^*_\ell(\bar{x}^*(t)) \leq c_2  \norm{\bar{x}^*(t)}^2.\label{eq: J_upperbound} 
\end{align}
\end{proposition}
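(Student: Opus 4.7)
The plan is to treat $J^*_\ell$ as a Lyapunov function for the nominal closed-loop dynamics and verify the three inequalities following the classical MPC template. After a coordinate shift placing the setpoint at the origin (so that $\ell(\bar{x},\bar{u}) = \norm{\bar{x}}_Q^2 + \norm{\bar{u}}_R^2$ and $\ell_N(\bar{x}) = \norm{\bar{x}}_P^2$), the lower bound \eqref{eq: J_lowerbound} follows immediately: since $\bar{x}^*(t) = \bar{x}^*(t|t)$ enters the first stage cost and $Q>0$, one has $J^*_\ell(\bar{x}^*(t)) \geq \ell(\bar{x}^*(t|t),\bar{u}^*(t|t)) \geq \lambda_{\min}(Q)\,\norm{\bar{x}^*(t)}^2$, so any $c_1 \in (0,\lambda_{\min}(Q)]$ will work.

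For the upper bound \eqref{eq: J_upperbound}, I first restrict to $\bar{x}^*(t) \in \mathcal{X}_N$ and use the suboptimal candidate $\bar{u}(k|t) = K\bar{x}(k|t)$ along the whole horizon. Parts~(i)--(ii) of Assumption~\ref{ass:terminal-ing} guarantee its feasibility, while iterating part~(iii) $N$ times telescopes the sum of stage costs plus terminal cost into at most $\ell_N(\bar{x}^*(t)) \leq \lambda_{\max}(P)\,\norm{\bar{x}^*(t)}^2$. The bound is then extended to all of $\mathcal{F}$ by compactness of $\mathcal{F}$ (inherited from $\mathcal{X}\ominus\mathcal{S}$ and $\mathcal{U}\ominus K\mathcal{S}$) together with continuity of the parametric value function, producing a possibly larger $c_2$ that works uniformly on $\mathcal{F}$.

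The descent inequality \eqref{eq: J_difference} is obtained via the standard candidate-sequence construction. At time $t{+}1$ I set $\bar{x}(t{+}1|t{+}1) := \bar{x}^*(t{+}1|t)$; constraint \eqref{eq:x-constr} is met because the realized error $e(t{+}1) = x(t{+}1) - \bar{x}^*(t{+}1|t)$ evolves according to \eqref{eq:controlled-sys-RPI}, and $\mathcal{S}$ is robustly positive invariant under $\bar{A}_K$ and $\zon_\phi$. I then form the shifted sequence $\bar{u}(k|t{+}1) = \bar{u}^*(k|t)$ for $k = t{+}1,\dots,t{+}N{-}1$, appended with $\bar{u}(t{+}N|t{+}1) = K\bar{x}^*(t{+}N|t)$, and propagate the nominal states through \eqref{eq:nominalconst}. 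Feasibility of the interior portion is inherited from optimality at $t$, and parts~(i)--(ii) of Assumption~\ref{ass:terminal-ing} cover the appended terminal step. Evaluating the candidate cost and invoking part~(iii) of Assumption~\ref{ass:terminal-ing} to cancel the terminal-step increment yields
$$J^*_\ell(\bar{x}^*(t{+}1)) \leq J^*_\ell(\bar{x}^*(t)) - \ell(\bar{x}^*(t|t),\bar{u}^*(t|t)) \leq J^*_\ell(\bar{x}^*(t)) - c_1\,\norm{\bar{x}^*(t)}^2,$$
which is \eqref{eq: J_difference}.

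I expect the main obstacle to be the uniform quadratic upper bound outside the terminal region: the candidate argument only gives the bound directly on $\mathcal{X}_N$, and extending it to the full $\mathcal{F}$ requires continuity of $J^*_\ell$ as a function of the initial parameter $\bar{x}^*(t)$, which for this parametric convex quadratic program over a compact feasible set is standard but must be invoked explicitly. The shifted-sequence feasibility, hinging on robust positive invariance of $\mathcal{S}$ under the disturbed error dynamics, and the one-step telescoping via Assumption~\ref{ass:terminal-ing}(iii) are otherwise mechanical.
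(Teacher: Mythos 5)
The paper does not actually prove this proposition --- it is quoted with a pointer to p.~235 of Rawlings et al.\ and no in-paper argument is given --- and your proof is precisely the standard textbook argument that citation refers to (lower bound from the first stage cost, upper bound via the terminal controller candidate on $\mathcal{X}_N$ plus compactness off $\mathcal{X}_N$, descent via the shifted candidate whose initial-state constraint \eqref{eq:x-constr} is certified by robust positive invariance of $\mathcal{S}$), so it is correct and matches the intended proof. The one step worth stating explicitly is the extension of the quadratic upper bound from $\mathcal{X}_N$ to all of $\mathcal{F}$: besides boundedness of $J^*_\ell$ on the compact set $\mathcal{F}$, you need that $\mathcal{X}_N=\{x:\norm{x-x_s}_P^2\le\alpha\}$ with $\alpha>0$ contains a neighborhood of the setpoint, so that $\norm{\bar{x}^*(t)}$ is bounded away from zero on $\mathcal{F}\setminus\mathcal{X}_N$ and the ratio $J^*_\ell(\bar{x}^*(t))/\norm{\bar{x}^*(t)}^2$ remains finite there; continuity of the value function alone does not deliver the quadratic form of the bound.
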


\begin{algorithm}[t]
\caption{Data-driven Tube-Based Zonotopic Predictive Control (TZPC)}
\label{alg:zonopc}
\textbf{Input} (from learning phase)\textbf{:} Prediction horizon $N$. 
Matrices $Q>0$, $R>0$, $P>0$, and $K$. Constraints $\mathcal{U}$, $\mathcal{X}$, and $\mathcal{X}_N$. Robustly positive invariant set $\mathcal{S}$. 
\begin{algorithmic}[1]
    \While{$t \in \Z_{\geq 0}$}
    \State\!\!\!\!\!Solve the optimal control problem~\eqref{eq:opt-zonopc}.
    \State\!\!\!\!\!Apply $u^*(t) = \bar{u}^*(t|t)+K(x(t)-\bar{x}^*(t|t))$ to the system~\eqref{eq:model-linear}.
    \State\!\!\!\!\!Increase the time step $t=t+1$.
   \EndWhile
\end{algorithmic}
\end{algorithm}

\begin{theorem}\label{thm:guarantees}
Suppose Assumptions~\ref{ass:localcontroller}--\ref{ass:terminal-ing} hold. Moreover, assume that the problem~\eqref{eq:opt-zonopc} is feasible at initial time $t = 0$. Then, the following closed-loop conditions are satisfied.
\begin{enumerate}[label=(\roman*)]
    \item The problem~\eqref{eq:opt-zonopc} is feasible at any $t \in \Z_{\geq 0}$; \label{thm:guarantees-i}
    \item The closed-loop trajectory satisfies the constraints, i.e., $x(t) \in \mathcal{X}$ and $u(t) \in \mathcal{U}$,  $ \forall t \in \Z_{\geq 0}$; \label{thm:guarantees-ii}
   \item The set $\mathcal{S}$ is robustly exponentially stable for the resulting closed-loop system. \label{thm:guarantees-iii} 
\end{enumerate} 
\end{theorem}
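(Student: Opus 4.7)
The three claims follow the standard tube-MPC template: part~\ref{thm:guarantees-i} via a shift-and-append candidate solution, part~\ref{thm:guarantees-ii} as a direct consequence of the constraint tightening, and part~\ref{thm:guarantees-iii} via Proposition~\ref{prop:cost-properties} serving as a Lyapunov argument on the nominal state. Parts~\ref{thm:guarantees-i}--\ref{thm:guarantees-ii} should in fact be proved by a joint induction on $t$, since invoking Lemma~\ref{lem:mismatch} at time $t$ requires $(x(t),u(t))\in\zon_x\times\zon_u$, which is exactly what part~\ref{thm:guarantees-ii} at time $t$ asserts. The effort is concentrated on part~\ref{thm:guarantees-i}.

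\textbf{Part~\ref{thm:guarantees-i}.} Given an optimal solution $(\bar u_t^{*},\bar x_t^{*})$ at time $t$, I build a candidate for time $t+1$ by shifting the optimal sequences one step and appending the terminal controller: for $k=t+1,\dots,t+N-1$ take $\bar u(k|t+1)=\bar u^*(k|t)$ and $\bar x(k|t+1)=\bar x^*(k|t)$, and set $\bar u(t+N|t+1)=K\bar x^*(t+N|t)$, $\bar x(t+N+1|t+1)=\bar A_K\bar x^*(t+N|t)$. The nominal dynamics \eqref{eq:nominalconst} hold by construction. The tightened constraints \eqref{eq:ubar-constr}--\eqref{eq:xbar-constr} on the shifted segment are inherited from $(\bar u_t^{*},\bar x_t^{*})$, while on the appended tail they reduce to Assumption~\ref{ass:terminal-ing}\ref{ass:terminal-ing-i}--\ref{ass:terminal-ing-ii}, which also gives \eqref{eq:terminal-constr}. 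The non-trivial check is \eqref{eq:x-constr}, namely $x(t+1)-\bar x(t+1|t+1)\in\mathcal{S}$. Plugging the applied input $u^*(t)=\bar u^*(t|t)+Ke(t)$, with $e(t):=x(t)-\bar x^*(t|t)\in\mathcal{S}$ by \eqref{eq:x-constr} at time $t$, into the mismatch form \eqref{eq:model-mismatch} and subtracting \eqref{eq:nominalconst} at $k=t$ yields the error recursion
\begin{equation*}
e(t+1)=\bar A_K e(t)+\phi(t),\qquad \phi(t)\in\zon_\phi,
\end{equation*}
where the inclusion follows from Lemma~\ref{lem:mismatch}. The RPI property $\bar A_K\mathcal{S}\oplus\zon_\phi\subseteq\mathcal{S}$ then gives $e(t+1)\in\mathcal{S}$, closing the induction.

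\textbf{Parts~\ref{thm:guarantees-ii}--\ref{thm:guarantees-iii}.} Part~\ref{thm:guarantees-ii} follows immediately from the Minkowski-sum/difference identities: $\bar x^*(t|t)\in\mathcal{X}\ominus\mathcal{S}$ combined with $e(t)\in\mathcal{S}$ gives $x(t)\in\mathcal{X}$, while $\bar u^*(t|t)\in\mathcal{U}\ominus K\mathcal{S}$ combined with $Ke(t)\in K\mathcal{S}$ gives $u^*(t)\in\mathcal{U}$. For part~\ref{thm:guarantees-iii}, chaining \eqref{eq: J_lowerbound}--\eqref{eq: J_upperbound} in Proposition~\ref{prop:cost-properties} yields $J^*_\ell(\bar x^*(t+1))\le(1-c_1/c_2)\,J^*_\ell(\bar x^*(t))$, so $J^*_\ell$ and hence $\|\bar x^*(t|t)\|$ decay geometrically. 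Since $e(t)\in\mathcal{S}$, taking $e(t)$ as the competitor in the infimum defining $\|x(t)\|_\mathcal{S}$ gives $\|x(t)\|_\mathcal{S}\le\|\bar x^*(t|t)\|$, from which the robust exponential estimate of Definition~\ref{def:robust-exp-stability} follows after relating $\|\bar x^*(0|0)\|$ to $\|x(0)\|_\mathcal{S}$ via the feasibility constraint $x(0)\in\bar x^*(0|0)\oplus\mathcal{S}$.

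\textbf{Main obstacle.} The delicate step is the error recursion together with $\phi(t)\in\zon_\phi$: the mismatch-plus-noise signal generated by the \emph{true} closed-loop system must lie precisely in the zonotope for which $\mathcal{S}$ was designed as an RPI set. Lemma~\ref{lem:mismatch} delivers exactly this through the data-driven characterisation $\zon_\phi=\zon_M\oplus\zon_\epsilon\oplus\zon_w$, with the term $\zon_\epsilon$ absorbing the gap between the finite data samples and the full admissible region $\zon_x\times\zon_u$; the remainder of the argument is a routine adaptation of model-based tube MPC.
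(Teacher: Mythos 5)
Your proof is correct and follows essentially the same route as the paper's: a shift-and-append candidate with the terminal controller for recursive feasibility, constraint tightening plus $e(t)\in\mathcal{S}$ for constraint satisfaction, and the geometric decay of $J^*_\ell$ from Proposition~\ref{prop:cost-properties} for robust exponential stability of $\mathcal{S}$. In fact you supply more detail than the paper does on the one genuinely nontrivial step --- verifying \eqref{eq:x-constr} at $t+1$ via the error recursion $e(t+1)=\bar A_K e(t)+\phi(t)$ with $\phi(t)\in\zon_\phi$ from Lemma~\ref{lem:mismatch} and the RPI property of $\mathcal{S}$, together with the induction needed to keep $(x(t),u(t))\in\zon_x\times\zon_u$ --- which the paper compresses into a single sentence.
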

\begin{proof}
\textit{\ref{thm:guarantees-i}--\ref{thm:guarantees-ii}.}
Without loss of generality, we let $u_s = \zero_{n_u}$ and $x_s = \zero_{n_x}$. We assume that the problem~\eqref{eq:opt-zonopc} is feasible at time $t$. At time $t + 1$, we construct a feasible candidate solution $(\bar{u}'_{t+1},\bar{x}'_{t+1})$ of \eqref{eq:opt-zonopc} by shifting the previously optimal solution  of~\eqref{eq:opt-zonopc} $(\bar{u}^*_t,\bar{x}^*_t)$ and appending $\bar{u}'(t+N|t+1) = K \bar{x}^*(t+N|t)$ with $K$ as in Assumption~\ref{ass:localcontroller}. Under Assumption~\ref{ass:terminal-ing}, this candidate solution satisfies all constraints of~\eqref{eq:opt-zonopc} and guarantees recursive feasibility. 

\textit{\ref{thm:guarantees-iii}.} We consider the cost of the optimal control problem \eqref{eq:opt-zonopc} $J^*_\ell(\bar{x}^*(t))$ as a candidate Lyapunov function.  From Proposition~\ref{prop:cost-properties}, we know that $J^*_\ell(\bar{x}^*(t))$ is lower- and upper-bounded. Moreover, according to~\eqref{eq: J_upperbound},
\begin{equation}
   -\norm{\bar{x}^*(t)}^2  \leq -\frac{1}{c_2 } J^*_\ell(\bar{x}^*(t)).
   \label{eq: J_upperbound-rewrite}
\end{equation}
Employing \eqref{eq: J_difference} together with \eqref{eq: J_upperbound-rewrite} leads to
\begin{align*}
    J^*_\ell(\bar{x}^*(t+1))-J^*_\ell(\bar{x}^*(t))
    &\leq  -\frac{c_1}{c_2 } J^*_\ell(\bar{x}^*(t)),
\end{align*}
that is, 
$J^*_\ell(\bar{x}^*(t+1)) \leq (1-\frac{c_1}{c_2 })J^*_\ell(\bar{x}^*(t)).
$ 
Define $\gamma = 1-\frac{c_1}{c_2 } \in (0,1)$. Then, we have 
\begin{equation*}
    J^*_\ell(\bar{x}^*(t)) \leq \gamma^t J^*_\ell(\bar{x}^*(0))\leq c_2\gamma^t\norm{\bar{x}^*(0)}^2,
\end{equation*}
which together with~\eqref{eq: J_lowerbound} leads to 
\begin{equation*}
  \norm{\bar{x}^*(t)} \leq \sqrt{\frac{c_2}{c_1 }} \sqrt{\gamma^t}\norm{\bar{x}^*(0)},
\end{equation*}
i.e., $\bar{x}^*(t)$ converges exponentially to the origin. Since  $x(t) \in \bar{x}^*(t) \oplus \mathcal{S}$, the set $\mathcal{S}$ is robustly exponentially stable for the controlled system~\eqref{eq:model-linear} \cite[Theorem~1]{mayne2005robust}.  
\end{proof}

\section{Numerical simulations}\label{sec:numerical-simulations}
This section presents two numerical examples. The first example is used to validate the performance of TZPC against TZDDPC \cite{Russo2023TZDDPC} and ZPC \cite{Alanwar2022Robust}. The second example illustrates the feasibility of TZPC in the context of building automation, where Assumption~\ref{ass:controllability} is not satisfied. Unless otherwise stated, for the simulations, we employed the toolbox CORA \cite{althoff2015introduction} in MATLAB, along with the Multi-Parametric Toolbox \cite{conf:mpt} and the YALMIP solver \cite{lofberg2004yalmip} \footnote{The MATLAB code for the corresponding examples are available in the GitHub repository at \href{https://github.com/MahsaFarjadnia/TZPC}{github.com/MahsaFarjadnia/TZPC}.}.
\begin{example}[Noisy double integrator]
\label{example:Alessio} This example is taken from \cite{Russo2023TZDDPC} and it represents a double integrator affected by bounded noise.
\begin{equation}
    x(k+1) =\begin{bmatrix} 1 & 1\\ 0 & 1\end{bmatrix} x(k)+ 
    \begin{bmatrix} 0.5\\ 1\end{bmatrix} u(k) +  w(k),
    \label{eq:example1}
\end{equation}
with $x(0)=[-5 \, -2]^\top$. The past data $D$ is constructed using 20 trajectories of length 5. The noise is assumed to be uniformly sampled from $\zon_w =  \left\langle \zero_{2},\begin{bmatrix} 0.02& 0.01\\0.01&0.02\end{bmatrix} \right\rangle$. The ingredients of the control problem are set as follows. The control input and state constraints are given in Fig.~\ref{fig:TZPC-ZPC_table}. The stage cost in~\eqref{eq:cost-function} is defined for any step $k$ as $\ell(x,u) = \norm{x}^2+ 10^{-2}|u|$. The prediction horizon is $N=7$. The terminal ingredients of \eqref{eq:opt-zonopc} are calculated as outlined in Section~\ref{subsec:Control} and are reported in Fig.~\ref{fig:TZPC-ZPC_table}. 

Fig.~\ref{fig:TZPC-ZPC} presents a comparative analysis between ZPC and TZPC. To ensure that both methods remain recursively feasible and are comparable, we have scaled the ZPC's state constraints by a 1.5 factor. Fig.~\ref{fig:reachablesets-ZPC-TZPC} illustrates the reachable sets of \eqref{eq:example1}, defined as the set of all states that the true system can reach at each $t$. These sets are given by 
$\mathcal{R}^x_{t+1} = \mzon_D ( \mathcal{R}^{x}_{t} \times \mathcal{R}^{u}_{t}) \oplus \zon_w,$ with $\mathcal{R}^{x}_{t} = \left\langle x(t), \zero_{n_x}\right\rangle$ and  $\mathcal{R}^{u}_{t} = \left\langle u^*(t), \zero_{n_u}\right\rangle$. 
Note that while ZPC uses reachability analysis, the reachable sets for TZPC are reported solely for illustrative purposes.

Additionally, we repeated the execution time analysis presented in \cite{Russo2023TZDDPC}, obtaining that the max. amount of time used by the solver (over 5 runs) is given by 40 min (ZPC), 78.26 min (TZDDPC, but only 10.12 min using its simplified version TZDDPC1, see \cite{Russo2023TZDDPC}), and 0.15 min (TZPC). To conclude, in addition to recursive feasibility and stability, the advantages of using TZPC include constraint satisfaction and shorter execution time. The latter is achieved by TZPC by circumventing the direct inclusion of reachable sets in the optimization constraints. Since the publicly available code for TZDDPC is implemented in Python, we computed the execution times for ZPC, TZPC, and TZDDPC using Python 3.8.18 in conjunction with the toolboxes mentioned in \cite{Russo2023TZDDPC} on a PC with an 11th Gen Intel\textsuperscript{\textregistered} Core\texttrademark~i7-1185G7 processor with 16.0 GB RAM. 
\end{example}

\begin{figure*}[!ht]\centering
\subfloat[]{\small\setlength\tabcolsep{0pt}
\begin{tabular}[b]{c}\hline
\rowcolor{gray!20}
Constraints\\\small\setlength\arraycolsep{2pt}
$\mathcal{U} =  \left\langle 0,1.3 \right\rangle$,\\
$\mathcal{X} = \left\langle \begin{bmatrix} -3.5\\ 0\end{bmatrix},\begin{bmatrix} 4 & 0\\ 0 & 2\end{bmatrix} \right\rangle$\\[15pt]\hline
\rowcolor{gray!20}
Terminal ingredients\\
$K = \begin{bmatrix} -0.107  &\! -0.603 \end{bmatrix}$\\[1pt]
$P = \begin{bmatrix} 0.895 & 0.492 \\ 0.492 & 3.709 \end{bmatrix}$, $\alpha=0.068$\\[10pt]
\setlength\arraycolsep{1pt}
$\mathcal{S} {=} \left\langle\begin{bmatrix} -0.038 \\ -0.036 \end{bmatrix}, \begin{bmatrix} 0.252 & 0 \\ 0 & 0.233 \end{bmatrix}\right\rangle$\\\hline
\end{tabular}\label{fig:TZPC-ZPC_table}
}\,
\subfloat[]{\includegraphics[height=4.4cm]{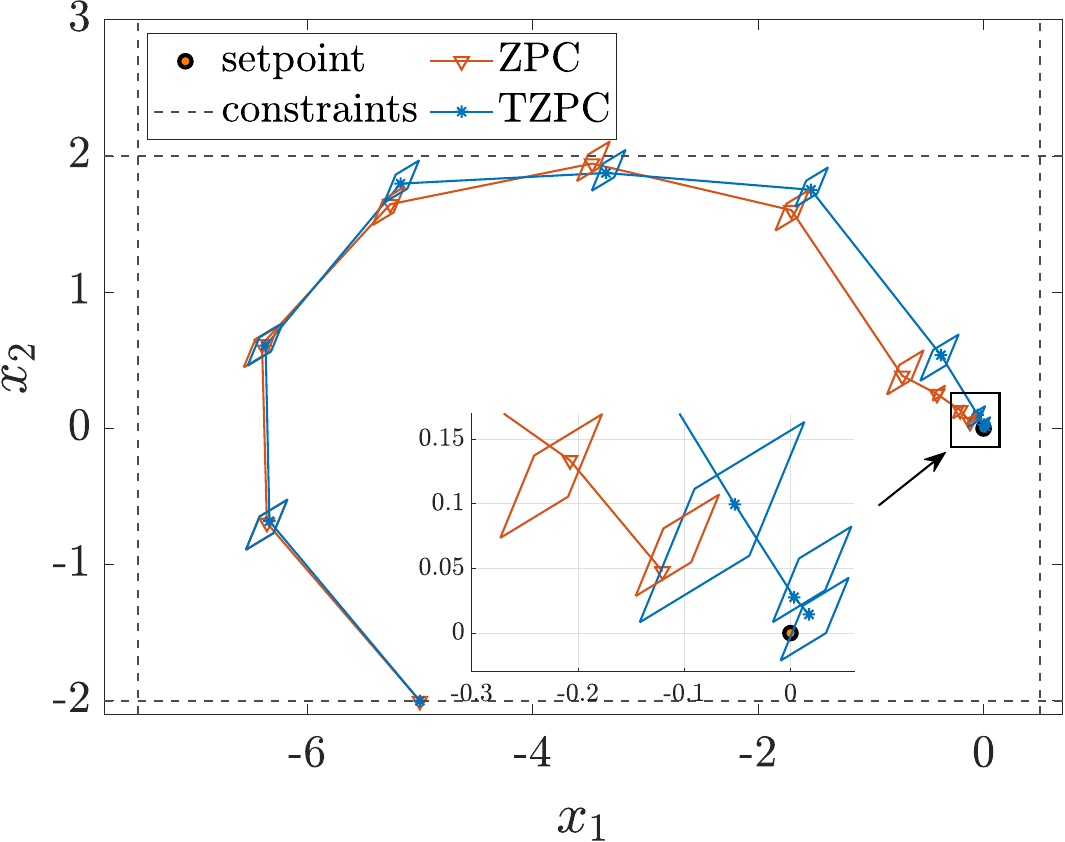}\label{fig:reachablesets-ZPC-TZPC}}\;
\subfloat[]{\includegraphics[height=4.4cm]{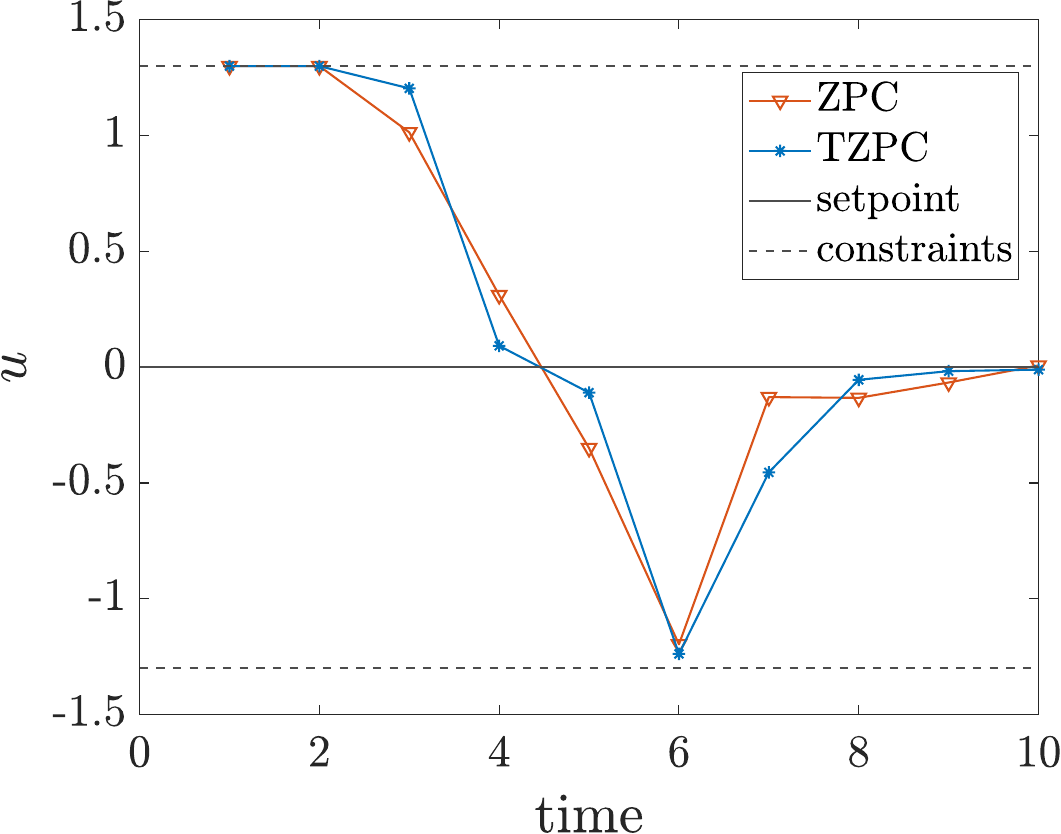}\label{fig:u-ZPC-TZPC}}
\caption{Example~\ref{example:Alessio}, comparison between ZPC and TZPC. (a): Ingredients of TZPC. (b): The reachable sets for the closed-loop system. (c): Optimal control input.}
\label{fig:TZPC-ZPC}
\end{figure*}

\begin{figure*}[!th]\centering
\subfloat[]{\small
\setlength\tabcolsep{0pt}
\begin{tabular}[b]{c}\hline
\rowcolor{gray!20}
Constraints\\\setlength\arraycolsep{2pt}
$\mathcal{U} = \left\langle 35,7 \right\rangle$\\ $\mathcal{X} = \left\langle \begin{bmatrix} 22\\ 20.5\end{bmatrix},\begin{bmatrix} 2 & 0\\ 0 & 1.25\end{bmatrix} \right\rangle$\\[15pt]\hline
\rowcolor{gray!20}Terminal ingredients\\
\setlength\arraycolsep{2pt}
$K = \begin{bmatrix} 1.604 &\! -5.286 \end{bmatrix}$\\[5pt] 
$P = \begin{bmatrix} 0.141 & 0.039 \\ 0.039 & 2.762 \end{bmatrix}$, 
$\alpha=0.0056$\\[10pt]
\setlength\arraycolsep{1pt}
$\mathcal{S} {=} 10^{-4}\left\langle\begin{bmatrix} -13 \\ -1 \end{bmatrix}, \begin{bmatrix} 3345 & 0 \\ 0 & 837 \end{bmatrix}\right\rangle$\\\hline
\end{tabular}\label{fig:building_table}
}\;
\subfloat[]{\includegraphics[height=4.7cm]{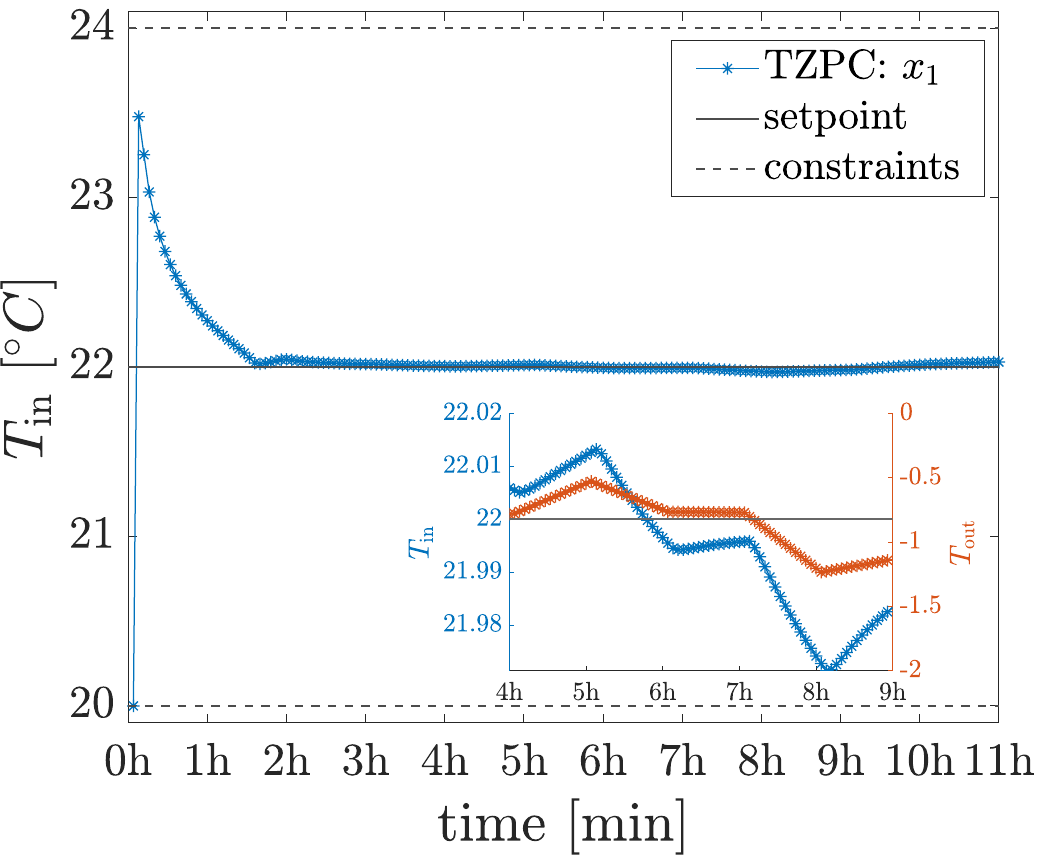}\label{fig:Tin}}\;
\subfloat[]{\includegraphics[height=4.6cm]{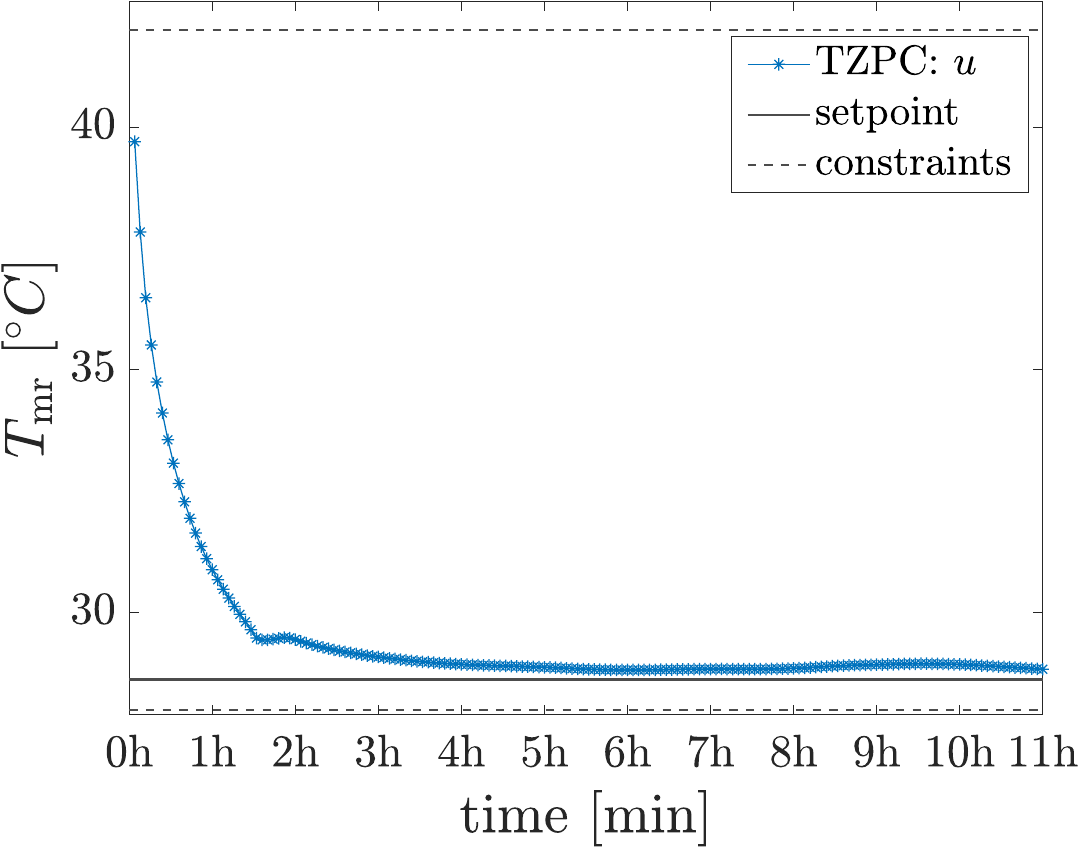}\label{fig:Tmr}}
\caption{Example~\ref{example:model-building}. (a): Ingredients of TZPC. (b): The closed-loop indoor temperatures trajectory. The inset illustrates the impact of the outside temperature on indoor conditions. (c): Optimal control input. The sample time is 4 min.}
\label{fig: building}
\end{figure*}

\begin{example}[Simplified building model]\label{example:model-building}
A review of methods to model the HVAC building system can be found, e.g., in  \cite{afroz2018modeling}. In this work, we are interested in thermal state-space models~\cite{Kulkarni2004Energy}, derived using energy balance equations. We utilize the following simplified thermal model of a single zone with radiators to showcase our algorithm in the context of residential buildings:
\begin{equation}
x(k+1) = \begin{bmatrix} 0.055  & 0.694\\ 0.043 & 0.956\end{bmatrix}  x(k) + \begin{bmatrix} 0.208\\ 0\end{bmatrix} u(k) + w(k).
\label{eq:model-building}
\end{equation}
In~\eqref{eq:model-building}, the state $x=[\Tin \; \Tw]^{\top}$ represents the indoor temperature ($\Tin$) and the temperature of the wall ($\Tw$),
the input $u=\Tmr$ is the mean radiant temperature of the radiators, and $w$ is the outside air temperature ($\Tout$).

The past data $D$ is constructed using 20 trajectories of length 5. We assume $\zon_w =  \left\langle 0,2 \right\rangle$. The setpoint temperature and control input are $x_s = \begin{bmatrix} 22 & 21.37\end{bmatrix}^\top$ and $u_s = 28.62$, respectively.
Note that, while the system is not controllable, it is stabilizable and the setpoint temperature is reachable. The ingredients of the control problem are set as follows.
The control input and state constraints, and the terminal ingredients are given in Fig.~\ref{fig:building_table}. The stage cost in \eqref{eq:cost-function} is  $\ell(x,u) = \norm{x-x_s}^2+ 10^{-2}|u-u_s|$, and $N = 11$.
The evolution of the indoor and outside temperatures and of the control input is depicted in Figs.~\ref{fig:Tin} and~\ref{fig:Tmr}, respectively. In this illustrative example, the impact of the disturbance is captured by the modest variation in $\Tout$, with an approximate magnitude of 0.5$^{\circ}C$ (see inset of Fig.~\ref{fig:Tin}). Hence, including other standard disturbances such as solar irradiation and internal heat gains from occupants, underscores the necessity for an effective robust controller within HVAC systems to ensure the maintenance of optimal indoor environmental conditions.

\end{example}

\section{Conclusion} \label{sec:conclusion}
This work introduces TZPC, a data-driven tube-based zonotopic predictive control approach designed for the robust control of unknown LTI systems using (noisy) input-state data. Through an offline data-driven learning phase and a subsequent online control phase, TZPC enables the formulation of a recursively feasible optimization problem, ensures the robust satisfaction of system constraints, and guarantees that the closed-loop system is robustly exponentially stable in the neighborhood of a desired setpoint. The effectiveness of the TZPC approach was demonstrated using two second-order examples. The first example enables a comparison with ZPC and TZDDPC, and the second one shows its applicability to stabilizable systems within the context of building automation.  Future work will showcase its applicability to higher-dimensional systems, which were not included in this study due to space constraints.

The theoretical guarantees provided by TZPC distinguish it from existing zonotopic-based approaches, such as ZPC \cite{Alanwar2022Robust} and TZDDPC \cite{Russo2023TZDDPC}. Additionally, this paper establishes the foundation for ensuring these closed-loop properties in the data-driven nonlinear zonotopic-based (NZPC) approach introduced in \cite{Farjadnia2023NZPC}. 

Future research will explore closed-loop guarantees in diverse cases, including time-varying constraints, input-output data, and bilinear and nonlinear unknown systems. Moreover, we will consider a comparative analysis with methods based on behavioral systems theory, such as those proposed in \cite{Berberich2021Guarantees}.

\bibliographystyle{IEEEtran}
\bibliography{setup,buildings,addrefs}

\end{document}